\renewcommand\footnotetextcopyrightpermission[1]{} 
\newcommand{\multigraph}{K}
\newcommand{\xhdr}[1]{\vspace{0.5mm}\noindent{{\bf #1.}}}
\newcommand{\xmark}{\ding{55}}%
\newcommand{\duration}[1]{\Delta(#1)}
\newcommand{\given}{\;\vert\;}
\crefname{figure}{Figure}{Figures}
\newcommand{\phantomsubfigure}[1]{\begin{subfigure}[b]{0.1\textwidth}\phantomcaption\label{#1}\end{subfigure}}
\newtheorem{problem}{Problem}
\newcommand{\E}{\mathbb{E}}
\newcommand{\Var}{\mathrm{Var}}
\begin{document}

\title{A sampling framework for counting temporal motifs}

\author{Paul Liu}
\affiliation{%
  \institution{Stanford University}
  \city{Stanford}
  \state{California}
}
\email{paul.liu@stanford.edu}

\author{Austin R. Benson}
\affiliation{%
  \institution{Cornell University}
  \city{Ithaca}
  \state{New York}
}
\email{arb@cs.cornell.edu}

\author{Moses Charikar}
\affiliation{%
  \institution{Stanford University}
  \city{Stanford}
  \state{California}
}
\email{moses@cs.stanford.edu}


\date{\today}


\begin{abstract}
Pattern counting in graphs is fundamental to network science tasks, and
there are many scalable methods for approximating counts of small
patterns, often called motifs, in large graphs.
However, modern graph datasets now contain richer structure, and incorporating
temporal information in particular has become a critical part of network analysis.
Temporal motifs, which are generalizations of small subgraph
patterns that incorporate temporal ordering on edges, are an emerging part of
the network analysis toolbox.
However, there are no algorithms for fast estimation of temporal motifs counts;
moreover, we show that even counting simple temporal star motifs is NP-complete.
Thus, there is a need for fast and approximate algorithms.
Here, we present the first frequency estimation algorithms for counting temporal
motifs.
More specifically, we develop a sampling framework that sits as a layer on top
of existing exact counting algorithms and enables fast and accurate memory-efficient estimates of
temporal motif counts.
Our results show that we can achieve one to two orders of magnitude speedups with
minimal and controllable loss in accuracy on a number of datasets.
\end{abstract}

\maketitle


\section{Scalable pattern counting in temporal network data}

Pattern counting is one of the fundamental problems in data mining~\cite{Chakrabarti-2006-curriculum,Han-2011-book}.
A particularly important case is counting patterns in graph data,
which is used within a variety of network analysis tasks such as
anomaly detection~\cite{Noble-2003-anomaly,Sun-2007-GraphScope},
role discovery~\cite{Henderson-2012-RolX,Rossi-2015-role},
and clustering~\cite{Rohe-2013-blessing,Benson-2016-hoo,Tsourakakis-2017-scalable}.
These methods typically make use of features derived from
the frequencies of small graph patterns---usually
called motifs~\cite{Milo-2002-motifs} or graphlets~\cite{Przulj-2004-graphlet}
(we adopt the ``motif'' terminology in this paper)---and
are used across a range of disciplines, including
social network analysis~\cite{Leskovec-2010-signed,Ugander-2013-subgraphs},
neuroscience~\cite{Hu-2012-motifs,Battiston-2017-motif}, and
computational biology~\cite{Mangan-2003-ffl,Przulj-2007-comparison}.
Furthermore, the counts of motifs have also been used to automatically uncover
fundamental design principles in complex systems~\cite{Milo-2002-motifs,Mangan-2003-ffl,Milo-2004-superfamilies}.

The scale of graph datasets has led to a number of algorithms for estimating the
frequency of motif
counts~\cite{Ahmed-2014-graphsample,Elenberg-2016-profiles,Bressan-2017-graphlets,Jain-2017-cliques,Wang-2018-moss}.
For example, just the task of estimating the number of triangles in a graph has
garnered a substantial amount of
attention~\cite{Tsourakakis-2009-doulion,Avron-2010-counting,Lim-2015-mascot,Seshadhri-2013-wedge,Eden-2017-triangles,Stefani-2017-triest}.
Many of these algorithms are based on sampling procedures amenable to streaming
models of graph data~\cite{Feigenbaum-2005-streaming,McGregor-2014-streaming}.
At this point, there is a reasonably mature set of algorithmic and statistical
tools available for approximately counting motifs in large graph datasets.

While graphs have become large enough to warrant frequency estimation
algorithms, graph datasets have, at the same time, become richer in structure.
A particularly important type of rich information is
time~\cite{Kossinets-2008-pathways,Holme-2012-temporal,Farajtabar-2015-coevolve,Gaumont-2016-dense,Scholtes-2017-networks}.
Specifically, in this paper, we consider datasets where edges are accompanied by
a timestamp, such as the time a transaction was made with a cryptocurrency,
the time an email was sent between colleagues, or the time a packet was forwarded
from one IP address to another by a router. Accordingly, motifs have been
generalized to incorporate temporal
information~\cite{Zhao-2010-communication-motifs,Kovanen-2011-motifs,Paranjape-2017-motifs}
and have already been used in a variety of
applications~\cite{Lahiri-2007-structure,Shao-2013-temporal,Meydan-2013-prediction,Kovanen-2013-motifs}.
However, we do not yet have algorithmic tools for estimating
frequencies of temporal motifs in these large temporal graphs. This is
especially problematic since including timestamps increases the size of the
stored data; for example, a traditional email graph would only record
if one person \emph{has ever} emailed another person, whereas the temporal version
of the same network would record \emph{every time} there is a communication.

To exacerbate the problem, counting temporal motifs turns out to be
fundamentally more difficult in a computational complexity sense. In
particular, we prove that counting basic temporal star motifs is NP-complete.
This contrasts sharply with stars in traditional static graphs, which
are generally considered trivial to count (the number of non-induced $k$-edge
stars with center node $u$ is simply ${d \choose k}$, where $d$ is the degree of
$u$). Thus, our result highlights how counting problems in temporal graphs
involve fundamentally more challenging computations, thus further motivating the
need for approximation algorithms.

Here we develop the first frequency estimation algorithms for counting temporal
motifs. We focus on the definition of temporal motifs from Paranjape et
al.~\cite{Paranjape-2017-motifs}, but our methodology is general and could be
adapted for other definitions. Our approach is based on sampling that employs as
a subroutine any algorithm (satisfying some mild conditions) that \emph{exactly}
counts the number of instances of temporal motifs. Thus, our methodology
provides a way to accelerate existing
algorithms~\cite{Paranjape-2017-motifs,Mackey-2018-chronological}, as well as
better exact counting algorithms that could be developed in the future.

At a basic level, our sampling framework partitions time into intervals, uses
some algorithm to find exact motif counts in a subset of the intervals, and
weights these counts to get an estimate of the number of temporal motifs.
A key challenge is that the time duration of a temporal motifs can cross interval
boundaries, which makes it challenging to obtain an accurate frequency estimator
since motifs of larger duration are more likely to be omitted.
At its core, our sampling framework uses importance sampling~\cite{mcbook} in two different
ways. First, we use importance sampling as a way to design an unbiased estimator
by appropriately scaling the exact counts appearing in some intervals. Second,
we use importance sampling as a way to (probabilistically) choose which
intervals to sample, which reduces the variance of our unbiased estimator.

In addition to the scalability advantages offered by sampling, our framework has
two other important features. First, the sampling requires a smaller amount of
memory. We show an example where this enables us to count a
complex motif on a large temporal graph when existing exact counting algorithms run out
memory. Second, the sampling procedure has built-in opportunity for
parallel computation, which provides a path to faster computation with exact counting
algorithms that do not have built-in parallelism.

As discussed above, our sampling framework employs an exact counting algorithm
as a subroutine. The constraints on the algorithm are that it must provide the
exact counts along with the so-called \emph{duration} of the motif instance (the
difference in the earliest and latest timestamp in the edges in the motif
instance; for example, the duration in the top left motif instance in
\cref{fig:example_C} is 32 - 16 = 16). This constraint holds for some existing
algorithms~\cite{Mackey-2018-chronological} but not for
others~\cite{Paranjape-2017-motifs}.
An additional contribution of our work is
a new exact counting algorithm for a class of star motifs that is compatible
with our sampling framework. As an added bonus, this new exact counting
algorithm actually out-performs existing algorithms.

We test our sampling procedure on several temporal graph datasets from a variety
of domains, ranging in size from 60,000 to over 600 million temporal edges and
find that our sampling framework can improve run time by one to two orders
of magnitude while maintaining a relative error tolerance of 5\% in the counts.
The variance analysis of our error bounds tends to be pessimistic, since we make
no assumptions on the distribution of timestamps within our datasets. Thus, we
also show empirically that our worst-case bounds are far from what we see in the
data.


\section{Preliminaries on temporal motifs}

We first review some basic notions of temporal motifs. There are a
few types of temporal motifs, which we discuss in the context of
related work in \cref{sec:related}.
Here we review the definitions used by Paranjape et
al.~\cite{Paranjape-2017-motifs}, which is one of the more flexible definitions
that also poses difficult computational challenges.

A \emph{temporal edge} is a timestamped directed edge between an ordered pair of
nodes. A collection of temporal edges is a \emph{temporal graph} (see
\cref{fig:example_A}). Formally, a temporal graph $T$ on a node set $V$ is a
collection of tuples $(u_i, v_i, t_i)$, $i = 1, \ldots, m$, where each $u_i$ and
$v_i$ are elements of $V$ and each $t_i$ is a timestamp in $\mathbb{R}$. There
can be many temporal edges from $u$ to $v$ (one example is in email data, where
one person sends an email to another many times). We assume that the timestamps
$t_i$ are unique so that the temporal edges in a graph can be ordered. This
assumption makes the presentation of the paper simpler, but our methods can
handle temporal graphs with non-unique timestamps.

\begin{figure}[t]
\phantomsubfigure{fig:example_A}
\phantomsubfigure{fig:example_B}
\phantomsubfigure{fig:example_C}
\centering
\includegraphics[width=\linewidth]{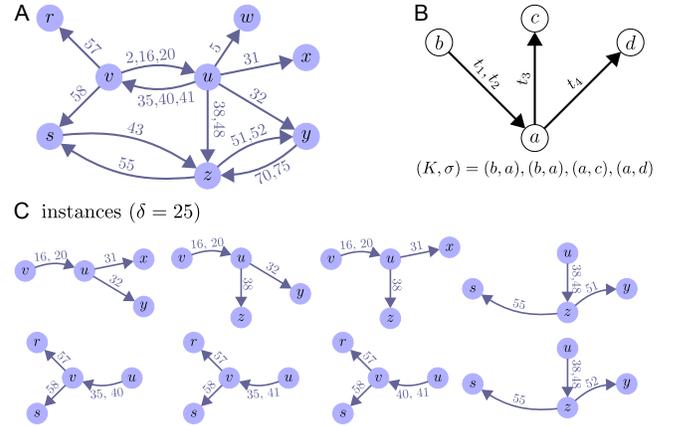}
\caption{%
Temporal graph and temporal motifs.
(A) Illustration of a temporal graph. The numbers along edges correspond to
timestamps. There can be multiple timestamped edges between a given pair of
nodes.
(B) Illustration of a motif, which is formally a multigraph $\multigraph$ with an ordering $\sigma$
on its edges.
(C) Eight $\delta$-instances of the motif in the temporal graph with
$\delta=25$.  The motifs match the multigraph, the edge ordering, and appear
within the time span $\delta$. The sequence of temporal edges $(u,v,16)$,
$(u,v,20)$, $(u,y,32)$, $(u,z,48)$ is \emph{not} a $\delta$-instance of the motif
because all edges do not fit within the time span $\delta$. The duration of a motif
instance $M'$, denoted $\duration{M'}$, is the difference between the last and
first timestamps; for example, the duration of the instance in the top left is
$32-16=16$.
}
\label{fig:example}
\end{figure}

If we ignore time and duplicate edges, the temporal graph induces a
standard (static) directed graph. Formally, the \emph{static graph}
of a temporal graph $T$ on a node set $V$ is a graph $G = (V, E)$,
where $E = \{ (u, v) \given \exists t : (u, v, t) \in T\}$.
Edges in $G$ are called \emph{static edges}.

Next, we formalize temporal motifs (illustrated in \cref{fig:example_B}).
\begin{definition}[Temporal motif~\cite{Paranjape-2017-motifs}]\label{def:temporal_motif}
A \emph{$k$-node, $l$-edge} temporal motif $M = (\multigraph, \sigma)$ 
consists of a multigraph $\multigraph = (V, E)$
with $k$ nodes and $l$ edges and an ordering $\sigma$ on the edges of $E$.
\end{definition}
We often find it convenient to represent $(\multigraph, \sigma)$ by an ordered
sequence of edges $(u_1, v_1), (u_2, v_2), \ldots, (u_l, v_l)$.
\Cref{def:temporal_motif} is a template for a temporal graph pattern, and we want
to count how many times the pattern appears in a temporal network.
Furthermore, we are interested in how often the motif occurs within some
time span $\delta$.
A collection of edges in a temporal graph is a
$\delta$-\emph{instance} of a temporal motif $M = (\multigraph, \sigma)$ if it matches the same edge
pattern of the multigraph $\multigraph$, the temporal edges occur in the specified order $\sigma$, and
all of the temporal edges occur within a $\delta$ time window (see \cref{fig:example_C}).
We now formalize this definition.
\begin{definition}[Motif $\delta$-instance~\cite{Paranjape-2017-motifs}]\label{def:instance}
A time-ordered sequence
$S = (w_1, x_1, t_1)$, $\ldots$, $(w_l, x_l, t_l)$ of $l$ unique temporal edges
from a temporal graph $T$
is a $\delta$-\emph{instance} of the temporal motif
$M = (u_1, v_1), \ldots, (u_l, v_l)$ if
\begin{enumerate}
\item There exists a bijection $f$ on the vertices in $M$ such that
$f(w_i) = u_i$ and $f(x_i) = v_i$, $i = 1, \ldots, l$; and
\item The edges all occur within the $\delta$ time span, i.e., $t_l - t_1 \le \delta$.
\end{enumerate}
\end{definition}
\noindent With this definition, motif instances are defined by just the
existence of edges (a general subgraph) and not the non-existence of edges (an
induced subgraph).

We are interested in counting how many motifs appear within a \emph{maximum}
time span of $\delta$ time units.
Our sampling framework will also make use of the actual \emph{duration} of motif instances,
or the difference in the latest and earliest timestamp of a motif instance.
We formalize this notion in the following definition.
\begin{definition}[Motif duration]\label{def:duration}
Let $S = (w_1, x_1, t_1)$, $\ldots$, $(w_l, x_l, t_l)$ be an
instance of a motif $M$ as per \cref{def:instance} with
$t_1 < t_2 < \ldots t_l$.
Then the \emph{duration} of the instance, denoted $\duration{S}$,
is $t_l - t_1$.
\end{definition}


\section{Counting temporal stars is hard}
Star motifs as in \cref{fig:example_B} are one of the fundamental small graph
patterns and are used in, e.g., anomaly detection~\cite{Akoglu-2010-oddball} and
graph summarization~\cite{Koutra-2015-summarizing}. In static graphs, counting
non-induced instances of stars is simple. Given the degree $d_u$ of node $u$,
$u$ is the center of ${d_u \choose k}$ $(k+1)$-node stars. Thus, there is a
simple polynomial-time algorithm for computing the total number of stars.

In contrast, once we introduce temporal information, it turns out that stars
become hard to compute. Specifically, we show in this section that counting
temporal stars is NP-complete, and even determining the existence of a temporal
star motif is NP-complete. This result serves two purposes. First, it highlights
that the computational challenges with temporal graph data are fundamentally
different from those in traditional static graph analysis. Second, the
computational difficulty in such a simple type of temporal motif motivates the
need for scalable approximation algorithms, which we develop in the next
section. We begin with a formal definition of a temporal star motif.
\begin{definition}
  A $k$-\emph{temporal star} is a temporal motif where the multigraph
  is connected and has node set $\{0,1,\ldots,k\}$ with edges
  $(u_i, v_i)$, $i = 1, \ldots, m$, where either $u_i$ or $v_i$ is 0, $i = 1, \ldots, m$.
\end{definition}
The restriction that either $u_i$ or $v_i$ is 0 means that each edge either
originates from node 0 or enters node 0.  The ordering $\sigma$ of the edges in
the multigraph needed by \cref{def:temporal_motif} is arbitrary---we only need
the star structure of the multigraph.
We will show that determining the existence of an instance of a $k$-\emph{temporal star}
in a temporal graph is NP-complete and then generalize our result to an even more restricted class of
star motifs. We begin with the formal decision problem.
\begin{problem}
Given a temporal graph $T$, a $k$-temporal star $S$, and a time span $\delta$,
the \textsc{k-Star-Motif} problem asks if there exists at least one
$\delta$-instance of $S$ in $T$.
\end{problem}

To establish NP-completeness, we reduce \textsc{k-Clique} to
\textsc{k-Star-Motif}. A \textsc{k-Clique} problem instance is formalized as
follows: given an undirected graph $G$ and an integer $k$, the \textsc{k-Clique} problem asks if
there exists at least one clique of size $k$ in $G$.
\begin{theorem}\label{thm:npcomplete}
\textsc{k-Star-Motif} is NP-complete.
\end{theorem}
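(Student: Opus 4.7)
The plan is to establish NP-completeness in two steps: membership in NP, then hardness.

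Membership is straightforward. A $\delta$-instance is a sequence of $l$ temporal edges from $T$, a polynomial-size certificate. Verification amounts to (i) checking that the associated vertex map is a bijection consistent with the multigraph $\multigraph$, (ii) checking that the sequence respects the prescribed order $\sigma$, and (iii) checking that the time span is at most $\delta$; all three checks run in polynomial time.

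For hardness, I would reduce from \textsc{k-Clique}. Given $G=(V,E)$ with $V=\{1,\ldots,n\}$ and edges listed in a fixed lexicographic order on endpoints, the plan is to construct in polynomial time a temporal graph $T$, a star motif $S$, and a time bound $\delta$, such that $T$ has a $\delta$-instance of $S$ iff $G$ contains a $k$-clique. The temporal graph $T$ has a center node $c$, a leaf $n_v$ for each $v\in V$, and one auxiliary ``marker'' leaf $b$. Each edge $(u,v)\in E$ (with $u<v$) is encoded, in lex order on $E$, by a constant-size gadget placed in its own disjoint time window; the gadget contains the two temporal edges $(c,n_u,\cdot)$ and $(c,n_v,\cdot)$ interleaved with a short alternating sequence of directed marker edges $(c,b,\cdot),(b,c,\cdot)$. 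The motif $S$ is a $(k+1)$-temporal star whose edge ordering $\sigma$ consists of $\binom{k}{2}$ consecutive ``pair-blocks'', one per unordered pair $\{i,j\}\subseteq\{1,\ldots,k\}$ in lex order on pairs; each pair-block is a fixed short motif-edge sequence through the center $0$, leaves $i$ and $j$, and the marker leaf $k+1$, mirroring a gadget's marker pattern. Finally, $\delta$ is set to the total temporal span of $T$.

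The forward direction is easy: given a clique $\{v_1<\cdots<v_k\}$, the bijection $f(0)=c$, $f(k+1)=b$, $f(i)=n_{v_i}$ matches the $\binom{k}{2}$ pair-blocks one-to-one (in lex order) with the gadgets of the clique's edges. The clique edges in lex order are a subsequence of $E$ in lex order, so the required temporal order across pair-blocks holds automatically, and the total span fits in $\delta$. The backward direction extracts a clique from an instance: directional constraints force $f(0)=c$ and $f(k+1)=b$ (only $b$ among leaves emits edges to $c$), and the marker pattern inside each pair-block forces the two leaf edges corresponding to motif edges $(0,i)$ and $(0,j)$ to lie between marker edges that can co-occur only within one gadget; thus those leaf edges come from a single gadget for some $G$-edge $(u,v)$, giving $f(i)=n_u$, $f(j)=n_v$ and hence $(f(i),f(j))\in E$. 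Since this holds for every pair $\{i,j\}$, the image $\{f(1),\ldots,f(k)\}$ is a $k$-clique in $G$.

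The main obstacle is designing the marker pattern so that each pair-block is genuinely pinned to a single gadget. A naive encoding (two temporal edges per $G$-edge, paired by two motif edges) fails because the motif can take one leaf edge from one gadget and one from a later gadget, producing a false pair $(f(i),f(j))$ not present in $E$. Overcoming this requires enough alternating directed marker edges---both bracketing the outside of the pair-block and separating its two interior leaf edges---so that any order-preserving assignment of motif marker edges to $T$ marker edges must consume all markers from one gadget, which in turn forces both interior leaf edges to live in that same gadget's time window. Verifying this locality property rigorously is the only delicate part of the argument; membership in NP and the forward direction are routine.
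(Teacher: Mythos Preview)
Your reduction idea---one gadget per edge of $G$, one pair-block in $S$ per unordered pair from $\{1,\ldots,k\}$, and a single auxiliary marker leaf $b$---has a real obstruction in the backward direction, exactly at the locality step you flag as delicate. The problem is structural, not merely technical: marker edges routed through a \emph{single} shared leaf $b$ cannot pin a pair-block to one gadget. A $\delta$-instance is a non-induced match, so the motif's marker edges may be mapped to any increasing subsequence of the $(c,b)/(b,c)$ edges in $T$, freely skipping markers. Concretely, however many alternating markers you insert, the pair-block for $\{i,j\}$ can place its leaf edge $L_i$ inside some gadget $g$ (so $f(i)$ is an endpoint of the $G$-edge encoded by $g$), then map the intervening motif markers to markers living in later gadgets, and place $L_j$ inside some $g'\neq g$; nothing in the pattern forbids this, and the desired conclusion $(f(i),f(j))\in E$ fails. ``Enough'' markers does not rescue the argument either: the marker supply in $T$ scales with $|E|$ while the motif's marker demand scales with $\binom{k}{2}$, so there is no counting argument forcing any pair-block to exhaust one gadget's markers. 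As stated, the scheme does not yield a correct reduction.

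The paper's reduction avoids this by making the bookends \emph{node-specific} rather than shared. Time is partitioned into $n$ blocks of width $n+2$, one per vertex of $G$. In $T$, each vertex $u$ receives exactly two backward edges $(u,0,\cdot)$ at the first and last timestamps of block $u$, and a forward edge $(0,v,\cdot)$ is placed inside block $u$ precisely when $(u,v)\in E$. In the $k$-star $S$, each leaf carries two backward edges, and the ordering $\sigma$ is arranged so that between any leaf $i$'s two backward edges sit $k-1$ forward edges, one to each other leaf. Because $T$ contains exactly two backward edges incident to any given vertex, matching leaf $i$'s backward pair simultaneously fixes $f(i)$ and pins those two edges to the boundaries of block $f(i)$; the $k-1$ sandwiched forward edges are then forced inside that block, which by construction certifies adjacency of $f(i)$ to each other $f(j)$. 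The locality you need comes for free because the bookends carry the leaf's identity---something a common marker leaf $b$ cannot provide.
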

\begin{proof}
Our input is an instance $(G,k)$ of \textsc{k-Clique} on a vertex set $V$.
Assume that the nodes in $V$ are numbered from $1$ to $n = \lvert V \rvert$ (\cref{fig:construction_A}).
We construct an instance $(T, S, \delta)$ of \textsc{k-Star-Motif}:
\begin{itemize}
\item 
\textbf{Construction of $T$} (\cref{fig:construction_B}).
For each undirected edge $(u,v)$ in $G$, add to $T$ two edges
$(0, u, (v-1)\cdot (n+2) + u + 1)$ and $(0, v, (u-1)\cdot (n+2) + v + 1)$.
For each $u \in V$, we add two backward edges,
$(u, 0, (u-1)\cdot(n+2)+1)$ and $(u, 0, u\cdot (n+2))$.
\item
\textbf{Construction of $S$} (\cref{fig:construction_C}).
For each node $u\in V$, add two backward edges with timestamps $(u-1) \cdot
(n+2) + 1$ and $u \cdot (n+2)$, and $k-1$ forward edges with timestamps
$\left\{(i-1)\cdot (n+2) + 1 + u \,\vert \, i \in [n]\setminus\{u\}\right\}$.
\item Set $\delta = \infty$.
\end{itemize}

\begin{figure}[t]
  \begin{center}
    \phantomsubfigure{fig:construction_A}
    \phantomsubfigure{fig:construction_B}
    \phantomsubfigure{fig:construction_C}    
    \includegraphics[width=\columnwidth]{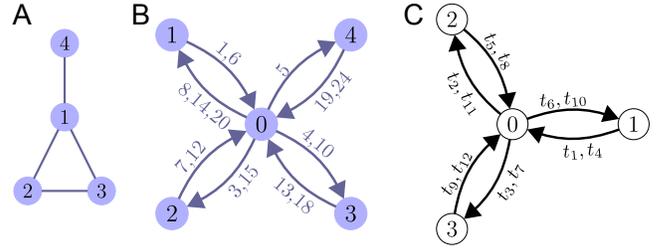}
    \caption{Structures used in proof of \cref{thm:npcomplete}, which says that
    determining the existence of a temporal star is NP-complete.
    (A) A static graph $G$.
    (B) A temporal graph $T$.
    (C) A star motif $S$.
    With the reduction, there is a 3-clique in $G$ if and only if there is a $\delta$-instance of $S$ in $T$
    with $\delta = \infty$.
    }
    \label{fig:construction}
\end{center}
\end{figure}

The timestamps come from the set $\{1, \ldots, n^2 + 2n\}$, and we think of
the timestamps as partitioned into $n$ blocks, with $n+2$ timestamps in each
block. If the timestamp of an edge lies in
$\{(u-1)\cdot (n+2) + 1, \ldots, u\cdot (n+2)\}$,
then we say that the edge belongs to block $u$. Each block then corresponds to
a node in $G$, with the first and last timestamp in each block reserved for the
backward edges we add to $T$. For each node $u$ in the original graph, we add
the two backward edges in block $u$ to node $u$ in $T$, and for each neighbor
$v$ of $u$, we add a forward edge using the timestamp in the $(u+1)$-th position
of block $v$. \Cref{fig:construction} is a schematic of the construction.
Observe that if there is a clique in $G$, then
by construction the star motif $S$ occurs in $T$.

Intuitively, the backward edges added to $T$ and $S$ serve as ``bookends''. If
the two backward edges corresponding to a node $u$ are found to be part of $S$,
then each of the $k-1$ other nodes in the motif has to contribute a forward edge
with timestamps between the two backward edges of $u$. By construction of $S$,
an edge connected to $v$ can only have a timestamp in block $u$ if $v$ is
connected to $u$ in $G$. This implies that $u$ is connected to the $k-1$ other
nodes selected in the motif $S$. Applying this argument to each node $u$ in the
motif $S$, there must be a clique in the original graph $G$.
\end{proof}

The result does not depend on having edges in two directions.  We call a star
motif $S$ \emph{unidirectional} if all of the edges in $S$ either originate from
or enter the center node (node 0, in our notation).
\begin{theorem}
\textsc{k-Star-Motif} is NP-complete even when restricted to unidirectional stars.
\end{theorem}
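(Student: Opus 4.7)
The plan is to reduce \textsc{k-Clique} via a small modification of the construction of \cref{thm:npcomplete}, designed so that every edge of the resulting star motif originates at the center $0$. Membership in NP is immediate, since a candidate $\delta$-instance can be verified in polynomial time, so the work is in the hardness direction.

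The idea is to replicate the ``bookend'' mechanism of \cref{thm:npcomplete} using only outgoing edges. In that construction, each backward edge $(u, 0, t)$ acts as a rigid endpoint of block $u$ that pins a leaf of the motif to a specific vertex. To make the motif unidirectional I would introduce, for each vertex $u$ of the input graph $G$, a fresh auxiliary ``bookend vertex'' $b_u$ in the temporal graph, and a matching fresh ``bookend leaf'' $B_u$ in the motif. Every backward edge $(u, 0, t)$ of $T$ is replaced by a forward edge $(0, b_u, t)$ in the new temporal graph $T'$, and every backward bookend edge of $S$ is replaced by a forward edge from $0$ to the corresponding auxiliary leaf in $S'$, placed at the same slot of the ordering $\sigma$. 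All content edges, which were already forward in \cref{thm:npcomplete}, are copied over unchanged, and I would keep $\delta = \infty$. The resulting $S'$ is a unidirectional star and $T'$ has size polynomial in $|V(G)|$ and $k$.

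For correctness, the forward direction is an immediate analogue of the original mapping: a $k$-clique in $G$ yields a $\delta$-instance of $S'$ in $T'$, with each $B_u$ sent to $b_u$. For the converse, I would use that the auxiliary vertices $b_u$ are the only targets in $T'$ of outgoing edges from $0$ at the bookend timestamps, so any bijection realizing $S'$ in $T'$ must send $\{B_u\}$ bijectively into $\{b_u\}$; the strict ordering of bookend edges in $\sigma$ then pins this bijection, assigning each block to a specific auxiliary leaf. The ordering subsequently sandwiches the content edges of each content leaf between the two bookends of its associated auxiliary leaf, reproducing the block-confinement argument of \cref{thm:npcomplete} and forcing the mapped content vertices to form a $k$-clique in $G$.

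The main obstacle will be to re-derive the coupling between bookend edges and content edges from the ordering $\sigma$ alone. In \cref{thm:npcomplete} this coupling is automatic because the backward and forward edges at a single motif leaf share an endpoint in the host graph, so the bijection ties them together ``for free''; after splitting them onto distinct auxiliary leaves $B_u$ and content leaves, the only remaining link between them is the relative placement of their timestamps in $\sigma$. If a direct verification of this weaker order-based coupling turns out to be subtle, a clean fallback is to add one extra ``linker'' forward edge per block that lands on the content leaf at a distinguished position strictly between the two bookends, and to place the corresponding edge in $T'$ only at the original vertex of that block; this restores the rigidity of \cref{thm:npcomplete} within a purely unidirectional construction.
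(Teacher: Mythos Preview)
Your approach is genuinely different from the paper's, but the core argument has a gap that the linker fallback does not repair.

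The flaw is in the sentence ``the auxiliary vertices $b_u$ are the only targets in $T'$ of outgoing edges from $0$ at the bookend timestamps, so any bijection realizing $S'$ in $T'$ must send $\{B_u\}$ bijectively into $\{b_u\}$.'' The premise is true, but the conclusion does not follow: a $\delta$-instance only needs to respect the \emph{relative order} $\sigma$, not land on any prescribed timestamps. Nothing prevents an auxiliary leaf $B_j$ (which has just two incident edges in $S'$) from being mapped to an ordinary content vertex $u$ of $T'$, with its two edges sent to two content edges $(0,u,\cdot)$ lying in two \emph{different} blocks $a<b$. Then the ``sandwiched'' content edges of motif block $j$ only need to land somewhere between those two timestamps, which can straddle many blocks of $T'$; the block-confinement argument from \cref{thm:npcomplete} evaporates, and you no longer force adjacency among the mapped content vertices. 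Your linker edge addresses the decoupling between $B_j$ and content leaf $j$ \emph{assuming} $B_j$ already hits an auxiliary vertex, but it does nothing to rule out the scenario above.

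For comparison, the paper avoids auxiliary vertices entirely. It keeps the same $k$ leaves but replaces each pair of backward bookends by $2n$ \emph{forward} bookend edges to that leaf (enlarging each block to width $3n$, with the first $n$ and last $n$ slots reserved for bookends). The point of the multiplicity $n$ is a pigeonhole argument: a leaf $j$ now carries $2n+k-1$ edges, while any vertex $w$ of $T$ has at most $n-1$ content edges outside its own block, so enough of leaf $j$'s bookend edges are forced to land inside the bookend regions of block $w_j$---at least one on each side---restoring the sandwich that pins the $k-1$ content edges into a single block and yields the clique. If you want to salvage your auxiliary-vertex route, you would need an analogous counting device (e.g., give each $B_j$ and each $b_u$ enough edges that $B_j$ \emph{cannot} be absorbed by any content vertex), at which point you are essentially re-deriving the paper's idea.
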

\begin{proof}
(Sketch.) Instead of using two backward edges for bookkeeping, we can expand the
size of each block to $3n$ and use the first $n$ and last $n$
timestamps within the block as the bookends. Thus, the graph $T$ in the
previous proof is modified by connecting the center node to each node $u$ with $2n$
forward edges using the $2n$ timestamps reserved for bookkeeping
in block $u$. The motif $S$ is modified by requiring the same forward edges as
before, plus an additional $2n$ forward edges, with timestamps in
$3 (u-1)\cdot n \cdot + i$ and $3 (u-1) \cdot n + n + i$ for $i = 1, \ldots, n$.
By using $n$ forward edges for each bookend, we ensure that any occurrence $S$
found in $T$ must include at least one edge from each bookend of the chosen
nodes. This allows us to again argue that $k-1$ forward edges must be between
the bookends of block $u$, implying that there is a $k$-clique.
\end{proof}

These hardness results illustrate the computational difficulties in counting
temporal graph patterns, which motivates scalable approximation algorithms for
counting such patterns. We next present a general sampling framework for
scalable estimation of the number of instances of temporal motifs.


\section{Algorithmic sampling framework}\label{sec:algorithms}

Suppose we are given a motif $M$, a time span $\delta$, and a temporal graph
$T$.
In this section, we develop a sampling framework to estimate the number of
$\delta$-instances of $M$ in $T$, which we denote by $C_M$.
Our sampling framework will employ some algorithm that can compute exactly the
number of $\delta$-instances of $M$ on temporal subgraphs of $T$.
The requirements on the algorithm are that, given a temporal graph $T'$, a motif
$M$, and a time span $\delta$, the algorithm $A$ outputs a sequence of the
count-duration pairs $\{(\text{count}_i, \Delta_i)\}$, where $\text{count}_i$ is
the number of instances of the motif with duration $\Delta_i$.
We denote this output by $A(T', M, \delta)$.
We work from these assumptions in this section, and \cref{sec:experiments}
discusses compatible algorithms.

\xhdr{Intervals and the count vector $Y_s$}
We begin with some definitions and technical lemmas that will later be used to
develop our estimator.
Let $s$ be a random integer uniformly drawn from $\{-c\delta+1,\ldots,0\}$ for some input
integer $c > 0$ that controls the size of the sampling windows.
We call $s$ a \emph{shift}, and we will eventually make use of multiple shifts
within our sampling framework.
We consider the set of intervals of width $c\delta$ with shift $s$:
\begin{equation}
  {\mathcal I}_s = \{[s+(j-1)c\delta, s+j\cdot c\delta-1],\; j=1,2,\ldots\}.
\end{equation}
For an instance $M'$ of the motif $M$ with duration $\duration{M'}$, it is easy
to see that the probability (over a random choice of shift $s$) that $M'$ is
completely contained within an interval in ${\mathcal I}_s$ is
\begin{equation}
p_{M'}=1-\frac{\duration{M'}}{c\delta}.
\end{equation}

Next, for an interval $I \in \mathcal{I}_s$, let $X_{M'}(I)$ be an indicator
random variable which equals 1 if $M'$ is completely contained in $I$ and 0 otherwise.
For each interval $I \in {\mathcal I}_s$, we associate a weighted count $w(I)$
of the number of instances of motif $M$ completely contained in the interval $I$:
\begin{equation}\textstyle
w(I)= \sum_{M'} \frac{1}{p_{M'}} X_{M'}(I).
\end{equation}
Let $Y_s$ be the vector of such counts:
\begin{equation}
Y_{s,j} = w(I_j),\; I_j = [s+(j-1)c\delta, s+j\cdot c\delta-1] \in {\mathcal I}_s
\end{equation}
(here, $Y_{s, j}$ denotes the $j$th coordinate of $Y_s$).
Next, let $X_{M'}$ be an indicator random variable that equals 1 if the motif
instance $M'$ is completely contained in an interval in ${\mathcal I}_s$ and 0
otherwise.
Then $\lVert Y_s \rVert_1 = \sum_{M'} \frac{1}{p_{M'}} X_{M'}$.
The following lemma says that $\lVert Y_s \rVert_1$ is an unbiased estimator for
the motif count $C_M$ for any value of $s$.
\begin{lemma}\label{lem:Ys_unbiased}
$\E[\lVert Y_s \rVert_1] = C_M$.
\end{lemma}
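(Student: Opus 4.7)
The plan is to unfold the definitions, swap the order of summation, and then apply linearity of expectation together with the formula for $p_{M'}$ given earlier.

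First, I would expand $\lVert Y_s \rVert_1$ using the definitions. By construction, $\lVert Y_s \rVert_1 = \sum_{j} w(I_j) = \sum_j \sum_{M'} \frac{1}{p_{M'}} X_{M'}(I_j)$. Because the intervals $I_j \in \mathcal{I}_s$ form a disjoint partition of the timeline, any motif instance $M'$ is contained in at most one interval, so $\sum_j X_{M'}(I_j)$ is either $0$ or $1$. This sum is exactly the indicator $X_{M'}$ defined just above the lemma, which justifies the identity $\lVert Y_s \rVert_1 = \sum_{M'} \frac{1}{p_{M'}} X_{M'}$ already stated by the authors. The first routine check here is that the disjointness of $\mathcal{I}_s$ is what lets us collapse the double sum cleanly.

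Next, I would take expectations over the random shift $s$. By linearity,
\begin{equation*}
\E[\lVert Y_s \rVert_1] = \sum_{M'} \frac{1}{p_{M'}} \E[X_{M'}] = \sum_{M'} \frac{1}{p_{M'}} \Pr[M' \text{ lies in some } I \in \mathcal{I}_s].
\end{equation*}
The key ingredient is that $\Pr[M' \text{ lies in some } I \in \mathcal{I}_s] = p_{M'} = 1 - \duration{M'}/(c\delta)$, which the authors already established. Plugging this in, the $p_{M'}$ factors cancel, leaving $\E[\lVert Y_s \rVert_1] = \sum_{M'} 1 = C_M$, where the final sum ranges over all $\delta$-instances of $M$ in $T$ and has cardinality $C_M$ by definition.

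The only substantive step that requires thought is verifying that $p_{M'}$ has the claimed form, but this is exactly what the paper states as immediate: given a fixed instance of duration $\Delta = \duration{M'}$ spanning some time window $[a, a+\Delta]$, the instance fits entirely inside one interval of $\mathcal{I}_s$ iff no interval endpoint lands in $(a, a+\Delta]$. Since $s$ is uniform on an integer range of size $c\delta$ and the interval endpoints are spaced $c\delta$ apart, the fraction of shifts that cut the instance is $\Delta/(c\delta)$, giving $p_{M'} = 1 - \Delta/(c\delta)$. No additional estimates or concentration arguments are needed; the result is an immediate consequence of importance sampling with the inverse-probability weight $1/p_{M'}$, so I do not expect any real obstacle beyond clean bookkeeping.
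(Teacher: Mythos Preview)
Your argument is correct and is essentially the paper's own proof: use the identity $\lVert Y_s\rVert_1=\sum_{M'}\frac{1}{p_{M'}}X_{M'}$, apply linearity of expectation with $\E[X_{M'}]=p_{M'}$, and conclude $\sum_{M'}1=C_M$. You simply supply more of the routine bookkeeping (disjointness of the intervals and the derivation of $p_{M'}$) that the paper leaves implicit.
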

\begin{proof}
%
Since $\E[X_{M'}] = p_{M'}$, 
$\E [\lVert Y_s \rVert_1] = \sum_{M'} \frac{1}{p_{M'}} \E[X_{M'}] = \sum_{M'} 1 = C_M$.
\end{proof}

\noindent The next lemma bounds the variance of $\lVert Y_s \rVert_1$.
\begin{lemma}\label{lem:var_Ys}
$\Var [\lVert Y_s \rVert_1] \leq \frac{1}{c-1} C_M^2$.
\end{lemma}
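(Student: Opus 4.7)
The plan is to bound the variance via Cauchy-Schwarz applied to covariances, which sidesteps the need to compute joint probabilities for pairs of motif instances that can be arbitrarily correlated depending on how they align with the interval grid.

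Starting from $\lVert Y_s \rVert_1 = \sum_{M'} X_{M'}/p_{M'}$, I would expand
\[
\Var[\lVert Y_s \rVert_1] = \sum_{M', M''} \mathrm{Cov}\!\left( \frac{X_{M'}}{p_{M'}}, \frac{X_{M''}}{p_{M''}} \right)
\]
and apply $\mathrm{Cov}(U,V) \le \sqrt{\Var(U)\cdot\Var(V)}$ to each term. The double sum then factors, yielding
\[
\Var[\lVert Y_s \rVert_1] \;\le\; \left( \sum_{M'} \sqrt{\Var(X_{M'}/p_{M'})} \right)^2.
\]

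For a single indicator I would compute $\Var(X_{M'}/p_{M'}) = (1-p_{M'})/p_{M'}$ using the fact that $X_{M'}$ is Bernoulli with parameter $p_{M'}$. Since every $\delta$-instance satisfies $\duration{M'} \le \delta$ by \cref{def:instance}, we get $p_{M'} = 1 - \duration{M'}/(c\delta) \ge (c-1)/c$, so $(1-p_{M'})/p_{M'} \le 1/(c-1)$ uniformly over $M'$. Summing over the $C_M$ instances then gives $\Var[\lVert Y_s \rVert_1] \le (C_M/\sqrt{c-1})^2 = C_M^2/(c-1)$, as claimed.

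The main obstacle I would anticipate is that a direct analysis of the joint events $\{X_{M'}=1,\;X_{M''}=1\}$ is messy: depending on whether two instances can or cannot both fit into a common interval (governed by the combined span from the earliest timestamp of one to the latest of the other relative to $c\delta$), their covariance may be positive or negative, so the off-diagonal terms do not neatly drop out. Cauchy-Schwarz avoids this case analysis at the price of a somewhat loose bound, but the bound it produces already scales like $1/(c-1)$, which is the right dependence on the sampling-window parameter $c$ for the estimator to concentrate as $c$ grows.
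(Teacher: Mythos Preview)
Your proof is correct and reaches the same bound, but the route differs from the paper's. The paper does not go through covariances or Cauchy--Schwarz at all: it bounds the second moment directly via the trivial indicator inequality $X_{M_1}X_{M_2}\le X_{M_1}$, which gives $\E[X_{M_1}X_{M_2}]\le \E[X_{M_1}]=p_{M_1}$, and combines this with $1/p_{M_2}\le c/(c-1)$ to obtain $\E[\lVert Y_s\rVert_1^2]\le \frac{c}{c-1}C_M^2$; subtracting $(\E[\lVert Y_s\rVert_1])^2=C_M^2$ finishes. In other words, the ``messy joint events'' you anticipated never need to be analyzed precisely---the crude bound $\Pr[A\cap B]\le\Pr[A]$ already suffices, and this is arguably the simpler argument. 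Your Cauchy--Schwarz route has the virtue of being more generic (it only needs a uniform bound on individual variances and would work equally well if the summands were not indicators), whereas the paper exploits the indicator structure directly. Both arguments use the same uniform lower bound $p_{M'}\ge (c-1)/c$, just in different places.
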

\begin{proof}
First, we have that
\[ \textstyle
\E [\lVert Y_s \rVert_1^2] = \E \left[\sum_{M'}\frac{1}{p_{M'}} X_{M'} \right] = \sum_{M_1} \sum_{M_2} \frac{1}{p_{M_1}p_{M_2}}\E[X_{M_1} X_{M_2}],
\]
where $M_1$ and $M_2$ range over the instances of the motif $M$.
Using the bounds $\frac{1}{p_{M_2}} \leq \frac{c}{c-1}$ and $\E[X_{M_1} X_{M_2}] \leq \E[X_{M_1}]$,
\[ \textstyle
\E [\lVert Y_s \rVert_1^2] \leq \frac{c}{c-1} \sum_{M_1} \frac{1}{p_{M_1}} \E[X_{M_1}] \sum_{M_2} 1 = \frac{c}{c-1} C_M^2.
\]
Putting everything together,
\begin{align*}
  \Var[\lVert Y_s \rVert_1]
  &= \E [\lVert Y_s \rVert_1^2] - \left(\E [\lVert Y_s \rVert_1]\right)^2 \\
  &\leq \frac{c}{c-1} C_M^2 - C_M^2 = \frac{1}{c-1} C_M^2.
\end{align*}
\end{proof}

Our sampling framework estimates $\lVert Y_s \rVert_1$
in order to estimate the number of motif instances $C_M$.
The basic idea of our approach is to use importance sampling to speed up this
estimation task, by picking a set of intervals in ${\mathcal I}_s$ and computing
their weights.
Here, computing the weight for an interval $I$ uses an \emph{exact} motif count
restricted to the interval $I$.
Equivalently, we (i) sample a subset of coordinates of $Y_s$, (ii)
compute their values exactly, and (iii) combine them to estimate $\lVert Y_s \rVert_1$.
We describe this procedure next.

\xhdr{Importance sampling for an estimator}
Let $Y_{s,j}$ denote the $j$th coordinate of $Y_s$, corresponding to interval
$I_j$.
Our estimator $Z$ is a random variable defined as follows.
First, we sample interval $I_j \in {\mathcal I}_s$ (independently) with some
probability $q_j$.
These $q_j$ values will be based on simple statistics of the intervals; we
will specify choices for $q_j$ later but note for now that they do not
necessarily sum to 1.
Second, let $Q_j$ be an indicator random variable corresponding to interval $I_j$,
where $Q_j$ equals 1 if $j$ is picked and 0 otherwise.
Finally, our estimator is
\begin{align}\label{eq:estimator}
Z \triangleq \sum_j Q_j \frac{Y_{s,j}}{q_j}.
\end{align}

Our first result is that $Z$ is an unbiased estimator for $C_M$,
the number of instances of the motif $M$.
\begin{theorem}
The random variable $Z$ in \cref{eq:estimator} is an unbiased estimator for
the number of motif instances, i.e., $\E[Z] = C_M$.
\end{theorem}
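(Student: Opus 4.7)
The plan is to use the tower property of conditional expectation to decouple the two sources of randomness in $Z$: the shift $s$ and the Bernoulli indicators $\{Q_j\}$. Once the shift $s$ is fixed, the intervals $I_j \in \mathcal{I}_s$ are determined, the counts $Y_{s,j}$ are deterministic, and the sampling probabilities $q_j$ (being simple statistics of the intervals) are determined as well. The $Q_j$ are then independent Bernoulli trials with $\E[Q_j \mid s] = q_j$, independent of the $Y_{s,j}$'s.

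The first step is to compute $\E[Z \mid s]$ using linearity of expectation:
\[
\E[Z \mid s] \;=\; \sum_j \frac{Y_{s,j}}{q_j}\,\E[Q_j \mid s] \;=\; \sum_j \frac{Y_{s,j}}{q_j}\cdot q_j \;=\; \sum_j Y_{s,j} \;=\; \lVert Y_s \rVert_1.
\]
This is the standard importance-sampling cancellation: dividing by $q_j$ exactly compensates for the bias introduced by sampling interval $I_j$ with probability $q_j$. Note that this step only requires $q_j > 0$ whenever $Y_{s,j} > 0$, which we may safely assume (otherwise the contribution is already zero).

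The second step is to take the outer expectation over the shift $s$ and invoke \cref{lem:Ys_unbiased}:
\[
\E[Z] \;=\; \E\!\left[\E[Z \mid s]\right] \;=\; \E\!\left[\lVert Y_s \rVert_1\right] \;=\; C_M.
\]
That completes the argument.

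There is no real obstacle here; the construction of $Z$ was designed precisely so that this identity holds. The only thing to be careful about is to state explicitly that the $Q_j$ are drawn independently of the shift-dependent quantities $Y_{s,j}$ (conditional on $s$), and that the $q_j$, while they may depend on $s$ through the intervals, are functions of $s$ and hence constants once $s$ is fixed. The nontrivial analytical content of the framework lives not in unbiasedness but in the variance bound, which will presumably be handled in a subsequent lemma.
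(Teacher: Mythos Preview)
Your proof is correct and follows essentially the same approach as the paper: condition on the shift $s$, use $\E[Q_j \mid s] = q_j$ to collapse the importance-sampling weights so that $\E[Z \mid s] = \lVert Y_s \rVert_1$, and then apply \cref{lem:Ys_unbiased}. The paper's version is simply terser, omitting the explicit discussion of conditional independence and the $q_j > 0$ caveat that you spell out.
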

\begin{proof}
First, note that $\E[Q_j] = q_j$.
For any $s$, $\E[Z \given s] = \sum_j Y_{s,j} = \lVert Y_s \rVert_1$.
Hence, $\E[Z] = \E[\lVert Y_s \rVert_1] = C_M$ by \cref{lem:Ys_unbiased}.
\end{proof}

Next, we work to bound the variance of our estimator $Z$.
To this end, it will be useful to define a scaled version $\hat{Y}_s$ of $Y_s$:
\begin{align}
  \hat{Y}_{s,j} \triangleq Y_{s,j} / \sqrt{q_j}.
\end{align}
The following lemma provides a useful equality on the variance of our estimator
in terms of $\hat{Y}_s$ and $Y_s$, conditioned on the shift $s$.
\begin{lemma}\label{lem:var_Zs}
$\Var[Z \given s] = \lVert \hat{Y}_s \rVert_2^2 - \lVert Y_s\rVert_2^2$.	
\end{lemma}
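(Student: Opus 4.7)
The plan is to condition on the shift $s$ so that the entries $Y_{s,j}$ become deterministic constants, and then exploit the fact that the only remaining randomness in $Z$ comes from the independent Bernoulli indicators $Q_j$, each with mean $q_j$. Once everything is set up this way, the claim reduces to a direct computation of the variance of a sum of independent terms.

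Concretely, I would first note that since the intervals $I_j$ are sampled independently (as stated just before the definition of the estimator), the $Q_j$ are independent Bernoulli variables with $\E[Q_j]=q_j$ and $\Var[Q_j]=q_j(1-q_j)$. Conditioning on $s$, the coefficients $Y_{s,j}/q_j$ in the expression $Z=\sum_j Q_j\,Y_{s,j}/q_j$ are constants, so variance passes through the sum:
\begin{equation*}
\Var[Z \given s] \;=\; \sum_j \frac{Y_{s,j}^2}{q_j^2}\,\Var[Q_j] \;=\; \sum_j \frac{Y_{s,j}^2}{q_j^2}\,q_j(1-q_j).
\end{equation*}

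Next I would simplify, splitting the factor $(1-q_j)$:
\begin{equation*}
\Var[Z \given s] \;=\; \sum_j \frac{Y_{s,j}^2}{q_j} \;-\; \sum_j Y_{s,j}^2.
\end{equation*}
Finally, I would recognize the two sums using the definition $\hat{Y}_{s,j}=Y_{s,j}/\sqrt{q_j}$, which gives $\hat{Y}_{s,j}^2 = Y_{s,j}^2/q_j$. Hence the first sum is exactly $\lVert \hat{Y}_s\rVert_2^2$ and the second is $\lVert Y_s\rVert_2^2$, completing the proof.

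There is no real obstacle here: the only subtlety worth flagging is that one must explicitly use the independence of the $Q_j$ (so that cross terms vanish) and that conditioning on $s$ makes the $Y_{s,j}$ deterministic. These are both already baked into the framework set up earlier in the section, so the argument is essentially a one-line variance computation followed by rewriting in the $\hat Y_s$ notation.
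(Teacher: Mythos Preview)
Your proposal is correct and follows essentially the same approach as the paper: condition on $s$, use independence of the Bernoulli indicators $Q_j$ to write $\Var[Z\mid s]=\sum_j (Y_{s,j}^2/q_j^2)\,q_j(1-q_j)$, then simplify and identify the two sums with $\lVert \hat Y_s\rVert_2^2$ and $\lVert Y_s\rVert_2^2$. The paper's proof is slightly terser but otherwise identical.
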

\begin{proof}
By independence of the $Q_j$,
\[ \textstyle
\Var[Z \given s] = \sum_j \Var \left[Q_j \frac{Y_{s,j}}{q_j}\right] = \sum_j \frac{Y_{s,j}^2}{q_j^2} q_j(1-q_j).
\]
Therefore, $\Var[Z \given s] = \sum_j Y_{s,j}^2 / q_j - Y_{s,j}^2 = \lVert \hat{Y}_s\rVert_2^2 - \lVert Y_s\rVert_2^2$.
\end{proof}

\noindent We are now ready to bound the variance of $Z$.
\begin{theorem}
$\Var[Z] \leq \E[\lVert \hat{Y}_s\rVert_2^2] - \E[\lVert Y_s\rVert_2^2] + \frac{1}{c-1} C_M^2$.	
\end{theorem}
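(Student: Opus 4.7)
The plan is to apply the law of total variance conditioning on the shift $s$, and then invoke the two preceding lemmas to handle the resulting terms.

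First, I would write
\begin{equation*}
\Var[Z] = \E\bigl[\Var[Z \mid s]\bigr] + \Var\bigl[\E[Z \mid s]\bigr].
\end{equation*}
For the first term, \cref{lem:var_Zs} gives $\Var[Z \mid s] = \lVert \hat{Y}_s \rVert_2^2 - \lVert Y_s \rVert_2^2$, so taking expectations over $s$ yields $\E[\lVert \hat{Y}_s \rVert_2^2] - \E[\lVert Y_s \rVert_2^2]$, which matches the first two terms on the right-hand side of the claim.

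For the second term, I would recall from the preceding unbiasedness proof that $\E[Z \mid s] = \lVert Y_s \rVert_1$, so $\Var[\E[Z \mid s]] = \Var[\lVert Y_s \rVert_1]$. By \cref{lem:var_Ys}, this is at most $\tfrac{1}{c-1} C_M^2$. Summing the two contributions gives the desired inequality.

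I don't expect any real obstacle here: everything has been set up by the earlier lemmas, so the proof is essentially a one-line application of the total variance decomposition followed by substitution. The only thing to be careful about is making explicit that $\E[Z \mid s] = \lVert Y_s \rVert_1$ (which was observed in passing in the unbiasedness proof) so that the variance of the conditional expectation is handled by \cref{lem:var_Ys} rather than left implicit.
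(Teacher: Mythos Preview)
Your proposal is correct and essentially identical to the paper's argument: the paper writes $\Var[Z] = \E[(Z - C_M)^2]$, inserts $\pm \lVert Y_s\rVert_1$, and observes that the cross term vanishes because $\E[Z \mid s] = \lVert Y_s\rVert_1$, which is exactly the law of total variance you invoke by name. Both then apply \cref{lem:var_Zs} and \cref{lem:var_Ys} in the same way.
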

\begin{proof}
For this bound, we first condition on $s$ and then take the expectation over random choice of $s$. 
\begin{align*}
  \Var[Z]
  &= \E[(Z-C_M)^2] \\
  &= \E_s\left[\left( (Z - \lVert Y_s \rVert_1) + (\lVert Y_s \rVert_1 - C_M) \right)^2\right] \\
  &= \E_s\left[\Var[Z \given s]+	(\lVert Y_s \rVert_1 - C_M)^2\right]\\
  &= \E[\lVert \hat{Y}_s\rVert_2^2] - \E[\lVert Y_s\rVert_2^2] + \Var[\lVert Y_s \rVert_1] &\text{(by \cref{lem:var_Zs})}\\
  &\leq \E[\lVert \hat{Y}_s\rVert_2^2] - \E[\lVert Y_s\rVert_2^2] + \frac{1}{c-1} C_M^2    &\text{(by \cref{lem:var_Ys})}	
\end{align*}
\end{proof}

Our analysis thus far has been for a single shift $s$.
If we repeat the above computations for $b$ randomly chosen shifts and report
the mean of the estimates, then the variance is reduced by a factor of $b$.
\Cref{alg:sampling} outlines the the overall sampling procedure, assuming that
the sampling probabilities $q_j$ are given along with the exact counting
algorithm $A$. In the algorithm, we use $T_j$ to denote the subgraph restricted
to interval $I_j$ and $A(T_j, M, \delta)$ to denote the output of the exact
counting algorithm on the interval, which is a sequence of the count-duration
pairs $\{(\text{count}_i, \Delta_i)\}$ of motif instances contained in the
interval. The algorithm also explicitly states that the parallelism that can be performed
over the samples.

\begin{algorithm}[tb]
  \DontPrintSemicolon
  \KwIn{Temporal graph $T$, motif $M$, time span $\delta$,
    sampling probabilities $q$, number of shifts $b$,
    window size parameter $c$, exact motif counting algorithm $A$.}
  \KwOut{Estimate of the number of instances of $M$.}
  \caption{Sampling framework for estimating the number of instances of a
    temporal motif in a temporal network. Without loss of generality, the
    timestamps in the temporal network are normalized to start at 0.}
  $Z \leftarrow 0$,\quad $t_{\max} \leftarrow \max \{ t \given (u, v, t) \in T\}$\;
  \For{$k = 1, \ldots, b$}{
    $s \leftarrow$ random integer from $\{-c\delta+1, \ldots, 0\}$\;
    \For{$j = 1, \ldots, 1+\lceil \frac{t_{\max}}{c\delta} \rceil$ (in parallel)}{
      \If{$\text{Uniform}(0,1) \leq q_j$}{
        $T_j \leftarrow \{(u, v, t) \in T \given t \in [s+(j-1)c\delta, s+j\cdot c\delta-1]\}$\;
        \For{$(\textnormal{count}_i, \Delta_i) \in A(T_j, M, \delta)$}{
          $Z_k \leftarrow Z_k + \text{count}_i / ((1 -\Delta_i / (c\delta)) \cdot q_j)$\;
        }
      }
    }
  }
  \KwRet{$\frac{1}{b}\sum_{k=1}^{b}Z_k$}
  \label{alg:sampling}
\end{algorithm}

\xhdr{Choosing the sampling probabilities}
In order to get average squared error $(\epsilon C_M)^2$, we need to set the
parameters as follows:
\begin{align}
\E[\lVert \hat{Y}_s\rVert_2^2] - \E[\lVert Y_s\rVert_2^2] + \frac{1}{c-1} C_M^2 &\leq b(\epsilon C_M)^2\\
\iff \frac{\E[\lVert \hat{Y}_s\rVert_2^2] - \E[\lVert Y_s\rVert_2^2]}{C_M^2} + \frac{1}{c-1} &\leq b\epsilon^2. \label{eq:tradeoffs}
\end{align}
The first term in the left-hand side of \cref{eq:tradeoffs}
combines (i) a natural measure of sparsity of the distribution of motifs with
(ii) the extent of correlation between the sampling probabilities $q_j$ and the
(weighted) motif counts for intervals $Y_{s,j}$.
In order to understand this, let $\ell$ denote the dimension of $Y_s$ and
consider the simple uniform setting of $q_j = 1/\ell$ (so one interval is
sampled in expectation). In this case, the term becomes
\begin{equation}\label{eq:sparsity_measure}
\frac{(\ell-1)\E[\lVert Y_s\rVert_2^2]}{\E[\lVert Y_s\rVert_1^2]}.
\end{equation}

\Cref{eq:sparsity_measure} is a natural measure of sparsity of the vector $Y_s$.
In the extreme case where $Y_s$ is a vector with only one non-zero coordinate,
the value is $\ell-1$, and in the other extreme where $Y_s$ is a uniform vector,
the value is bounded above by 1.
In the sparse case, we need to increase the sampling probabilities---thus
sampling more intervals---to compensate for the large variance.
In the worst case, this would require looking at all the intervals, i.e., we get
no running time savings from sampling (however, we will see in our experiments
that the data is far from the worst case in practice).
Nonetheless, the ability of the algorithm to pick sampling probabilities $q_j$ gives
flexibility to mitigate the dependence on the sparsity of $Y_s$.
To illustrate this point, in the extremely favorable case when $q_j$ is
proportional to $Y_{s,j}$, i.e., $q_j = Y_{s,j} / \sum_j Y_{s,j}$ (so one
interval is sampled in expectation), the first term on the left-hand side of \cref{eq:tradeoffs} is
less than 1.
This analysis suggests that a good choice of sampling probabilities roughly
balances the two terms:
\[
\frac{\E[\lVert \hat{Y}_s\rVert_2^2] - \E[\lVert Y_s\rVert_2^2]}{C_M^2} \approx \frac{1}{c-1}.
\]

A priori, we do not know $Y_s$ or $\hat{Y}_s$. What we can do is choose the
sampling probabilities by some easily measured statistic that we think is
correlated with $Y_s$, such as the number of temporal edges or number of static
edges. For this paper, we simply choose $q_j$ to be proportional to the number
of temporal edges in the interval, i.e.,
\begin{equation}\label{eq:heuristic_q}
q_j = r \cdot \frac{\lvert \{ (u, v, t) \in T \given t \in I_j \} \rvert}{\lvert T \rvert},
\end{equation}
where $r$ is a small constant (in practice on the order of 10--100).
This leads to substantial speedups, as we will see in the next section.
There are certainly more sophisticated approaches one could take to choose the
$q_j$, and we leave this as an avenue for future research.

\xhdr{Streaming from sampling}
When memory is at a premium, the sampling framework above can be made memory
efficient. By considering the windows ${\mathcal I}_s$ in chronological order,
the edges of past windows do not need to be stored. By running several
estimators in parallel, we can achieve any accuracy we want while only needing
to store edges in an interval of at most $c\delta$ at a time. As we will see in
our experiments, the memory savings allows us to processes larger temporal
graphs than we could with an exact algorithm.


\section{Computational experiments}\label{sec:experiments}
In this section, we use our sampling framework from \cref{sec:algorithms} and
various exact temporal motif counting algorithms to count temporal motifs on
real-world datasets. By exploiting sampling and the ability sample in parallel,
we obtain substantial speedups with modest computational resources and only a
small error in the estimation.

\xhdr{Data}
We gathered 10 datasets for our experiments. Paranjape et al.\ analyzed
seven of them~\cite{Paranjape-2017-motifs}, and we collected three larger datasets
to better analyze the performance of our methodology. \Cref{tab:datasets}
lists summary statistics of the datasets, and we briefly describe them
below. Each dataset is a collection of timestamped directed edges. The time
resolution of each dataset is 1 second, except for the EquinixChicago dataset,
where the time resolution is 1 microsecond.

\begin{table}[tb]
  \setlength{\tabcolsep}{3pt}
  \caption{Summary statistics of temporal networks.}
  \label{tab:datasets}
  \begin{tabular}{r @{\quad} c c c c}
    \toprule
    dataset         & \# nodes & \# static & \# temporal & time  \\
                    &          & edges           & edges   & span \\\midrule
    CollegeMsg      & 1.9K  & 20.3K & 59.8K & 194 days   \\
    email-Eu-core   & 986   & 24.9K & 332K  & 2.20 years \\
    MathOverflow    & 24.8K & 228K  & 390K  & 6.44 years \\
    AskUbuntu       & 157K  & 545K  & 727K  & 7.16 years \\
    SuperUser       & 192K  & 854K  & 1.11M & 7.60 years \\
    WikiTalk        & 1.09M & 3.13M & 6.10M & 6.24 years \\
    StackOverflow   & 2.58M & 34.9M & 47.9M & 7.60 years \\
    Bitcoin         & 48.1M & 86.8M & 113M  & 7.08 years \\
    EquinixChicago & 12.9M & 17.0M & 345M  & 62.0 mins  \\
    RedditComments & 8.40M & 517M  & 636M  & 10.1 years \\
    \bottomrule
  \end{tabular}
\end{table}

\noindent \emph{CollegeMsg}~\cite{Panzarasa-2009-patterns}.
A network of private messages sent on an online social network at the University
of California, Irvine.
\newline
\noindent \emph{email-Eu-core}~\cite{Paranjape-2017-motifs}.
A collection of internal email records from a European research institution.
\newline
\noindent \emph{MathOverflow, AskUbuntu, SuperUser, and StackOverflow}~\cite{Paranjape-2017-motifs}.
These datasets are derived from user interactions on Stack Exchange question and
answer forums. A temporal edge represents a user replying to a question,
replying to a comment, or commenting on a question.
\newline
\noindent \emph{WikiTalk}~\cite{Leskovec-2010-Governance,Paranjape-2017-motifs}.
A network of Wikipedia users making edits on each others' ``talk pages.''
\newline
\noindent \emph{Bitcoin}~\cite{Kondor-2014-inferring}.
A network representing timestamped transactions on Bitcoin. The addresses were
partially aggregated by a de-identification heuristic~\cite{Reid-2012-analysis}
implemented by Kondor et al., using all transactions up to February 9,
2016~\cite{Kondor-2014-inferring}. Timestamps are the creation time of the block
on the blockchain containing the transaction.  We will release this dataset with
the paper.
\newline
\noindent \emph{EquinixChicago}~\cite{CAIDA-data}.
This dataset was constructed from passive internet traffic traces from CAIDA's
monitor in Chicago on February 17, 2011. Each edge represents a packet sent from
one anonymized IP address to another. Data was collected from the ``A
direction'' of the monitor.
\newline
\noindent \emph{RedditComments}~\cite{Hessel-2016-science}.
This dataset was constructed from a large collection of comments made by users
on \url{https://www.reddit.com}, a popular social media platform. A comment from
user $u$ to user $v$ at time $t$ induces a temporal edge in our dataset.


\subsection{Exact counting algorithms}\label{sec:exact_counting}
Our sampling framework is flexible since it can use any algorithm that exactly
counts temporal motifs as a subroutine, provided that this algorithm can be
transformed to output the count-duration pairs $\{(\text{count}_i, \Delta_i)\}$,
where $\text{count}_i$ is the number of instances of the motif with duration
$\Delta_i$. A recently proposed ``backtracking'' algorithm satisfies this
constraint~\cite{Mackey-2018-chronological}. The fast algorithms for 2-node,
3-edge star motifs introduced by Paranjape et al.\ do not satisfy these
requirements, since the algorithm uses an inclusion-exclusion rule that cannot
output the durations. However, we still use this algorithm as a baseline in our
experiments. We also create a new exact counting algorithm that is compatible
with our sampling framework, which we describe below.

\xhdr{Backtracking algorithm (BT,~\cite{Mackey-2018-chronological})}
The backtracking algorithm examines the edges of the input graph in chronlogical
order and matches one edge of the motif at a time. The software was not released
publicly, so we have re-implemented it with some optimizations. The algorithm
is compatible with our sampling framework. The algorithm is inherently
sequential, so our parallel sampling framework is especially useful with this
method.

\xhdr{Fast 2-node, 3-edge algorithm (F23,~\cite{Paranjape-2017-motifs})}
Paranjape et al.\ introduced a collection of algorithms for counting motifs with
at most 3 edges. Here we use their specialized algorithm for motifs with two
nodes and three edges, which produces exact counts in time linear in the number
of edges. This algorithm is incompatible with our sampling framework, but we
still use it for comparative purposes.

\begin{algorithm}[tb]
  \DontPrintSemicolon
  \KwIn{Two nodes $u$ and $v$, and a sequence of temporal
    edges $(e_1, t_1), \ldots,$ $(e_L, t_L)$ with $t_1 < \ldots < t_L$, time
    span $\delta$, and $e_i = (u,v)$ or $(v, u)$.}
  \KwOut{List $\{(\text{count}_i, \Delta_i)\}$ of counts of instances
    of the motif in \cref{fig:experimental_motifs_A} between nodes $u$ and $v$ with durations
    $\Delta_i$.}
  \caption{EX23: A simple exact algorithm to count the two-node motif in
    \cref{fig:experimental_motifs_A}. This algorithm can
    easily be modified to count any 2-node, 3-edge temporal motif.}
  $C \leftarrow$ empty counter dictionary with default value 0\;
  \For{$i = 1 \ldots L$}{
    \lIf{$e_i \neq (u, v)$} {continue}
    $N_b \leftarrow 0$\;
    \For{$j = i+2 \ldots L$} {
      $\Delta \leftarrow t_j - t_i$\;
      \lIf{$t_j - t_i > \delta$} {break}
      \lIf{$e_j = (v, u)$} {$N_b \leftarrow N_b + 1$}\lElse
      {
        $C[\Delta] \leftarrow C[\Delta] + N_b$\;
      }
    }
  }
  \KwRet{\textnormal{[$(C[\Delta], \Delta)$ for key $\Delta$ in $C$]}}
\label{alg:EX23}
\end{algorithm}

\begin{table*}[t]
\caption{%
Running time in seconds of algorithms with and without our sampling framework and with and
without parallelism.
The modifiers ``+S'' and ``+PS'' stands for sampling and parallelized sampling
respectively.
We compare the backtracking algorithm (BT,~\cite{Mackey-2018-chronological}),
our specialized algorithm for counting 2-node, 3-edge motifs (EX23,
\cref{alg:EX23}), and the specialized fast algorithm from Paranjape et al.\ for
counting 2-node, 3-edge motifs (F23,~\cite{Paranjape-2017-motifs}).
\emph{In all datasets, our EX23 algorithm within our parallel sampling framework has the
  fastest running time.}
}
\label{tab:perf23}
\begin{tabular}{r @{\qquad} c c c @{\qquad} c c c @{\qquad} c c @{\qquad} c}
\toprule
 dataset        & BT    & BT+S  & BT+PS & EX23  & EX23+S & EX23+PS & F23   & F23+P & error (\%) \\ \midrule
 CollegeMsg     & 0.076 & 0.072 & 0.038 & 0.017 & 0.016  & 0.009   & 0.056 & 0.054 & 0.33       \\
 Email-Eu       & 0.339 & 0.305 & 0.191 & 0.073 & 0.078  & 0.027   & 0.217 & 0.165 & 1.44       \\
 MathOverflow   & 0.545 & 0.361 & 0.143 & 0.233 & 0.148  & 0.097   & 0.998 & 0.878 & 1.74       \\ 
 AskUbuntu      & 1.414 & 1.305 & 0.500 & 0.592 & 0.311  & 0.176   & 2.534 & 2.371 & 1.84       \\ 
 SuperUser      & 2.590 & 1.446 & 0.483 & 1.097 & 0.194  & 0.104   & 4.595 & 4.129 & 1.69       \\ 
 WikiTalk       & 15.92 & 14.88 & 5.463 & 4.737 & 3.645  & 0.876   & 20.46 & 18.23 & 0.89       \\ 
 StackOverflow  & 198.9 & 160.8 & 79.58 & 108.1 & 69.50  & 17.81   & 299.2 & 230.1 & 1.95       \\ 
 Bitcoin        & 514.0 & 520.7 & 102.3 & 494.4 & 233.5  & 88.66   & 10348 & 10135 & 3.59       \\ 
 EquinixChicago & 480.4 & 180.3 & 37.64 & 382.7 & 56.33  & 24.64   & 477.3 & 383.8 & 0.00       \\ 
 RedditComments & 7301  & 7433  & 2910  & 1563  & 3154   & 367.4   & 6602  & 5036  & 4.83       \\ \bottomrule
\end{tabular}
\end{table*}

\xhdr{A new exact algorithm for 2-node 3-edge motifs (EX23)}
We devised a new algorithm for 2-node, 3-edge motifs that is compatible with our
sampling framework. In this case, each pair of nodes in the input graph forms
an independent counting problem. For each pair of nodes in the input that are
neighbours in the static graph, we gather all temporal edges between the two
nodes. Then we fix the first and last edge of the 3-edge motif by iterating over
all pairs of gathered temporal edges. By maintaining an additional counter of
the number of edges between the two fixed edges, we can count the number of
temporal motifs that begins on the first fixed edge and ends on the second fixed
edge (the procedure is outlined in \cref{alg:EX23}). Overall, this procedure
takes $O(\sum_{u,v}k_{u,v}^2)$ time, where the sum iterates over all pairs of
nodes in the graph, and $k_{u,v}$ is the number of temporal edges between nodes
$u$ and $v$. With additional code complexity from special tree structures, the
running time can be improved to $O(\sum_{u,v}k_{u,v}\log k_{u,v})$ and still be
compatible with our sampling framework. However, this optimization is not
crucial for the main focus of our paper, which is the acceleration of counting
algorithms with sampling. Thus, we use the simpler un-optimized algorithm, which
we will see actually out-performs the other exact counting algorithms.

\subsection{Performance results}
We now evaluate the performance of several algorithms:
(i) the three baseline exact counting algorithms described in the previous section (BT, F23, EX23);
(ii) the F23 baseline with parallelism enabled (F23+P);
(iii) our sampling framework on top of backtracking and our new exact counting algorithm (BT+S, EX23+S); and
(iv) our parallelized sampling framework on top of backtracking and our new exact counting algorithm (BT+PS, EX23+PS).
As explained above, the F23 algorithm is incompatible with our sampling
framework; we include the algorithm and its parallelized version as baselines
for fast exact counting.

All algorithms were implemented in C++, and all experiments were executed on a
16-core 2.20 GHz Intel Xeon CPU with 128 GB of RAM. The algorithms ran on a
single thread unless explicitly stated to be parallel. The parallel algorithms
used 16 threads. In the case of the sampling algorithm, parameters are set so
that the approximations are within 5\% relative error of the true value.

\begin{figure}[tb]
  \centering
  \phantomsubfigure{fig:experimental_motifs_A}
  \phantomsubfigure{fig:experimental_motifs_B}
  \includegraphics[width=0.8\columnwidth]{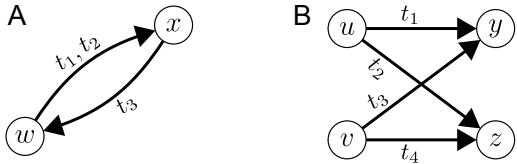}
  \caption{Motifs used in counting experiments. (A) The 2-node 3-edge motif for which
    results are reported in \cref{tab:perf23}. (B) The bi-fan motif for which results
    are reported in \cref{tab:bifan}.}
  \label{fig:experimental_motifs}
\end{figure}


\xhdr{Experiments on a 2-node, 3-edge motif}
\Cref{tab:perf23} reports the performance of all algorithms on the 2-node,
3-edge temporal motif in \cref{fig:experimental_motifs_A} (we chose this motif
to allow us to compare against one of the fast algorithms of Paranjape et
al.~\cite{Paranjape-2017-motifs}). The time span $\delta$ was set to 86400
seconds = 1 day in all datasets except EquinixChicago, where $\delta$ was 86400
microseconds (these are the same parameters used in exploratory data analysis in
prior work~\cite{Mackey-2018-chronological}).

We highlight three important findings.
First, our new EX23 algorithm with parallel sampling is the fastest algorithm on
every dataset. Comparing our algorithm against the previous state of the art, we
see speedups up to 120 times faster than the slowest exact algorithm (see the
results for Bitcoin). This is in part due to the fact that our EX23 algorithm
is actually faster the than the backtracking algorithm (BT) and
the fast algorithm of Paranjape et al.\ (F23). In other words, our proposed
exact algorithm already out-performs the current state of the art.

Second, in all cases, parallel sampling provides a substantial speedup over
the exact baseline algorithm. Speedups are typically on the order of 2--6x
improvements in running time. We used 16 threads but did not optimize our
parallel algorithms; there is ample room to improve these results with
additional software effort.

Third, the running time of the backtracking algorithm with sampling (BT+S) is often
comparable to simple backtracking (BT). In these cases, we hypothesize that the
backtracking algorithm has enough overhead and is pruning enough edges to make
simple sampling not worthwhile under our parameter settings. However, parallel
sampling with the backtracking algorithm (BT+PS) can yield substantial speedups
(see, e.g., Bitcoin, EquinixChicago, and RedditComments). This
illuminates an important feature of our sampling framework, namely, we
get parallelism for free. The backtracking algorithm is inherently sequential,
but parallel sampling can achieve substantial speedups. Thus, future research in
the design of fast exact counting algorithms can largely leave parallelism to be
handled by our sampling framework. Finally, although not reported, the sampling
framework requires a smaller amount of memory than the exact algorithms; thus,
if no parallelism is available, we can at least gain in terms of memory, if not
in speed.

\begin{figure}[t]
\includegraphics[width=\columnwidth]{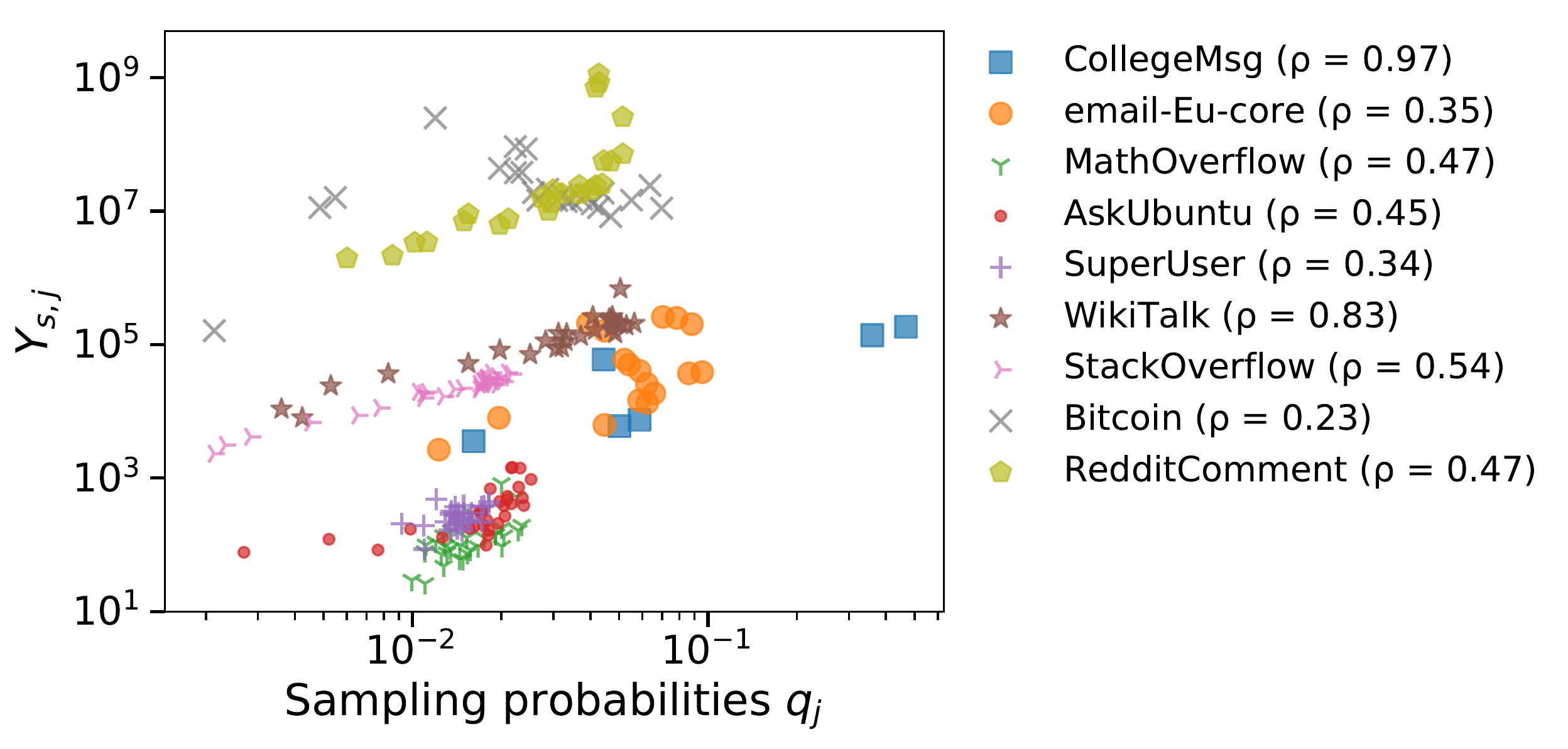}
\caption{%
Sampling probabilities, which are based on the number of edges
(\cref{eq:heuristic_q}) and the values of $Y_s$ for one sample $s$ used in
estimating the counts of the 2-node, 3-edge temporal motif in
\cref{fig:experimental_motifs_A}. For our sampling procedure to be effective,
these values should be positively correlated (see the discussion in
\cref{sec:algorithms}). Positive correlation leads to less variance in the
estimation. The correlation $\rho$ is listed in the legend. The CollegeMsg and
WikiTalk datasets both have a strong positive correlation and a small relative
error in the motif count estimate (see \cref{tab:perf23}).
}
\label{fig:correlations}
\end{figure}

We used the heuristic in \cref{eq:heuristic_q} to determine the sampling
probabilities $q$ in these experiments. To understand why this heuristic worked
for these datasets (i.e., the relative errors are small), we measured the
correlation of $q$ and the coordinates of the vector $Y_s$ used in the sampling
framework (\cref{fig:correlations}). In datasets such as CollegeMsg and
WikiTalk, the correlation between is large, and consequently, the
relative error in the estimates is small (\cref{tab:perf23}).

\xhdr{Experiments on a 4-node, 4-edge bi-fan motif}
Next, we show the results of the backtracking algorithm on a so-called
``bi-fan'' motif (\cref{fig:experimental_motifs_B}). This motif has four nodes
and four temporal edges. The static version of the motif appears is important in networks
from a variety of domains, including sociology~\cite{Zhang-2013-potential},
neuroscience~\cite{Benson-2016-hoo}, gene
regulation~\cite{Dobrin-2004-aggregation}, and circuit
design~\cite{Milo-2002-motifs}. Since this motif has four nodes and four edges,
our new fast algorithm and the fast algorithm of Paranjape et al.\ cannot be
used. Thus, we focus on accelerating the backtracking algorithm with
(parallelized) sampling. For these experiments, we set $\delta = 3600$, as
running the algorithm with $\delta = 86400$ exceeds the alloted memory on the
Bitcoin and RedditComments dataset. \Cref{tab:bifan} shows the results.

\newcommand{\nofinish}{\xmark}
\begin{table}[t]
\caption{%
Running times in seconds of the backtracking algorithm
(BT,~\cite{Mackey-2018-chronological}) with our sampling framework (``+S''
denotes serial sampling and ``+PS'' denotes parallel sampling).
The symbol ``\xmark'' indicates that the algorithm failed due to
the machine running out of memory.
Our sampling framework uses less memory than the exact counting algorithm, so it
is always able to estimate the motif count on these datasets. 
}
\label{tab:bifan}
\begin{tabular}{r @{\qquad} c c c c}
\toprule
 dataset        & BT        & BT+S  & BT+PS & error (\%) \\ \midrule
 CollegeMsg     & 0.081     & 0.076 & 0.069 & 1.02       \\
 Email-Eu       & 0.353     & 0.307 & 0.120 & 0.85       \\
 MathOverflow   & 0.528     & 0.362 & 0.041 & 3.60       \\ 
 AskUbuntu      & 1.408     & 0.909 & 0.078 & 4.52       \\ 
 SuperUser      & 2.486     & 1.269 & 0.164 & 2.33       \\ 
 WikiTalk       & 51.85     & 35.35 & 11.99 & 2.01       \\ 
 StackOverflow  & 221.7     & 93.10 & 5.208 & 4.88       \\ 
 Bitcoin        & 1175      & 985.9 & 269.3 & 3.09       \\ 
 EquinixChicago & 481.2     & 45.50 & 5.666 & 1.33       \\ 
 RedditComments & \nofinish & 6739  & 2262  & --         \\ \bottomrule
\end{tabular}
\end{table}

Again, the parallel sampling procedure provides a substantial speed-up over the
baseline algorithm. We emphasize that our simple parallelization technique is a
property of the sampling procedure and not a property of the exact algorithm. In
fact, the exact algorithm is inherently sequential, and the sampling framework
enables parallelism in a trivial way with minimal loss in accuracy. Moreover,
since the sampling algorithm only examines a portion of the graph at a time, it
uses much less memory than the exact counting algorithm. For example, with the
RedditComments dataset, the exact algorithm ran out of memory, while the
sampling algorithms completed successfully (thus no relative error is
reported). This feature is useful in streaming applications, where memory is
limited.


\section{Additional related work}\label{sec:related}

Our sampling framework relies on importance sampling, which is used for finding
motifs in gene sequence
analysis~\cite{siddharthan2008phylogibbs,chan2010importance,gupta2007variable,liang2008sequential};
here ``motif'' refers to short string patterns in DNA. (The term ``motif'' in
the context of network analysis is borrowed from this
domain~\cite{shen2002network}.)  We have already covered much of the related
work in sampling algorithms for pattern counting in static graphs, so we
summarize additional research related to various definitions of temporal motifs here.
Some of these are for sequences of static snapshot
graphs~\cite{Zhang-2014-dynamic,Lahiri-2007-structure,Jin-2007-trend}, which is
a different data model than the one in this paper, where edges have timestamps
in a continuum.  For the data in our paper, there are motifs based on ``adjacent
events'' that require each new edge in a sequence to be within a certain
timespan of each
other~\cite{Zhao-2010-communication-motifs,Gurukar-2015-commit,Hulovatyy-2015-exploring}.
These definitions are slightly more restrictive than the one by Paranjape et
al.\ analyzed here; however, the principles of our techniques could also be
adapted to these cases as well.  Kovanen et al.\ use the same notion of event
adjacency but also restrict motif instances to cases where the events are
consecutive for all nodes involved (i.e., within the span of the motif instance,
there can be no other temporal edge adjacent to one of the
nodes)~\cite{Kovanen-2011-motifs}.  This definition is even more restrictive in
the events that it captures but it does allows for much faster exact counting
algorithms, e.g., triangle motifs can be counted in linear time in the size of
the data. Thus, speeding up computation with sampling is less appealing for this
definition.  Finally, there is also a line of research in finding dense
subgraphs in datasets similar to our model, which is a specific type of
motif~\cite{Gaumont-2016-dense,Viard-2018-enumerating,Viard-2016-computing}.


\section{Discussion}
We have developed a sampling framework for estimating the number of instances of temporal motifs
in temporal graphs.
Overall, our sampling framework is flexible in several ways. First, the
framework is built on top of exact counting algorithms. Improvements in these
algorithms can be used directly within our framework, provided it meets the
conditions needed by our framework. In fact, we demonstrated this to be the case with
the specialized algorithm we developed for 2-node, 3-edge motifs in
\cref{sec:experiments}, which was faster than existing exact
counting methods for that particular motif. Second, our sampling framework
provides natural parallelism, which allowed us to achieve
orders of magnitude speedups on algorithms that do not have obvious parallel
implementations. Finally, the sampling is inherently less memory intensive,
which allowed us to estimate motif counts on datasets on which exact algorithms
cannot even run (\cref{tab:bifan}); thus, our framework makes knowledge
discovery feasible in new cases.

An extraordinary amount of research has gone into scalable
estimation algorithms for counting patterns in static graphs.
Our paper takes this line of research in a new direction by considering richer
patterns that arise when temporal information is incorporated into the graph.
We anticipate that our work will open new challenges for algorithm designers
while simultaneously providing a solution for domain scientists
working with large-scale temporal networks.

\section*{Acknowledgements}
This research was supported in part by NSF Award DMS-1830274. We thank Eugene Y.Q. Shen for donating the computing resources that made some of the larger datasets possible.

\bibliography{refs}
\bibliographystyle{abbrv}

\end{document}